\documentclass[pdflatex,sn-mathphys-num]{sn-jnl}


\usepackage{graphicx}%
\usepackage{multirow}%
\usepackage{amsmath,amssymb,amsfonts}%
\usepackage{amsthm}%
\usepackage{mathrsfs}%
\usepackage[title]{appendix}%
\usepackage{xcolor}%
\usepackage{textcomp}%
\usepackage{manyfoot}%
\usepackage{booktabs}%
\usepackage{algorithm}%
\usepackage{algorithmicx}%
\usepackage{algpseudocode}%
\usepackage{listings}%

\usepackage{enumitem}
\setlist[description]{font=\normalfont\bfseries} 
\usepackage{caption}
\captionsetup{justification=centering,labelsep=space}


\theoremstyle{thmstyleone}%
\newtheorem{theorem}{Theorem}
\newtheorem{proposition}[theorem]{Proposition}%

\theoremstyle{thmstyletwo}%
\newtheorem{remark}{Remark}%

\theoremstyle{thmstylethree}%

\raggedbottom

\begin{document}

\title[Article Title]{Quantum Meet-in-the-Middle Attacks on Key-Length Extension Constructions}

\author*[1]{\fnm{Min} \sur{Liang}}\email{liangmin07@mails.ucas.ac.cn}
\author[1,2]{\fnm{Ruihao} \sur{Gao}}
\author[1]{\fnm{Jiali} \sur{Wu}}

\affil*[1]{\orgname{Data Communication Science and Technology Research Institute}, \orgaddress{\city{Beijing}, \postcode{100191}, \country{China}}}
\affil[2]{\orgname{Graduate School of China Academy of Telecommunication Technology}, \orgaddress{\city{Beijing}, \postcode{100191}, \country{China}}}

\abstract{Key-length extension (KLE) techniques provide a general approach to enhancing the security of block ciphers by using longer keys. There are mainly two classes of KLE techniques, cascade encryption and XOR-cascade encryption. This paper presents several quantum meet-in-the-middle (MITM) attacks against two specific KLE constructions.

For the two-key triple encryption (2kTE), we propose two quantum MITM attacks under the Q2 model. The first attack, leveraging the quantum claw-finding (QCF) algorithm, achieves a time complexity of $O(2^{2\kappa/3})$ with $O(2^{2\kappa/3})$ quantum random access memory (QRAM). The second attack, based on Grover's algorithm, achieves a time complexity of $O(2^{\kappa/2})$ with $O(2^\kappa)$ QRAM. The latter complexity is nearly identical to Grover-based brute-force attack on the underlying block cipher, indicating that 2kTE does not enhance security under the Q2 model when sufficient QRAM resources are available.

For the 3XOR-cascade encryption (3XCE), we propose a quantum MITM attack applicable to the Q1 model. This attack requires no QRAM and has a time complexity of $O(2^{(\kappa+n)/2})$ ($\kappa$ and $n$ are the key length and block length of the underlying block cipher, respectively.), achieving a quadratic speedup over classical MITM attack.

Furthermore, we extend the quantum MITM attack to quantum sieve-in-the-middle (SITM) attack, which is applicable for more constructions. We present a general quantum SITM framework for the construction $ELE=E^2\circ L\circ E^1$ and provide specific attack schemes for three different forms of the middle layer $L$. The quantum SITM attack technique can be further applied to a broader range of quantum cryptanalysis scenarios.}

\keywords{Quantum security, key-length extension, triple encryption, XOR-cascade encryption, meet-in-the-middle, sieve-in-the-middle}


\maketitle

\section{Introduction}
When a block cipher has a small key size and fails to provide the required security level, key-length extension (KLE) techniques offer a general approach to enhancing security by employing longer keys without modifying the underlying block cipher. A canonical example is 3DES, which utilizes a triple encryption cascade with longer keys to augment the security of DES.
Currently, the researchers had proposed various KLE techniques, including double/triple encryption and FX construction, and studied their security from provable security and cryptanalysis \cite{dinur2012,gazi2013,lee2013,gazi2015,merkle1981,kilian2001}.

While classical cryptanalysis has established the security of KLE techniques such as triple encryption, the emergence of quantum algorithms presents new challenges, notably the quadratic speedups enabled by Grover's algorithm. Furthermore, quantum algorithms can accelerate the resolution of numerous computational problems and intensify attacks on certain cryptographic constructions \cite{bonnetain2022,leander2017,bonnetain2019}. This acceleration may lead to the failure of some KLE techniques in classical model to enhance security within the quantum model. Therefore, the security of KLE techniques in the quantum setting demands careful attention.

This article investigates the quantum security of two KLE techniques through cryptanalytic methods, aiming to provide a foundation for evaluating the security enhancement efficacy of KLE techniques in the quantum model.

\noindent\textbf{Cascade and XOR-cascade encryptions.} These are two fundamental classes of KLE designs.

Cascade encryption involves using a block cipher $l(l\geq 2)$ times in a cascading fashion, denoted as $CE^l (M)=E_{k_l}(\cdots E_{k_2}(E_{k_1}(M))\cdots)$. For $l=2$ and $l=3$, $CE^l$ corresponds to double and triple encryption, respectively. A notable example is 3DES, a triple encryption variant where the second encryption is replaced by decryption.

XOR-cascade encryption consists of an $l$-layer cascade interleaved with $l+1$ whitening keys, formalized as $XCE^l (M)=E_{k_l}(\cdots E_{k_2}(E_{k_1}(M\oplus z_1)\oplus z_2)\cdots \oplus z_{k_l})\oplus z_{k_{l+1}}$. When $l=1$, $XCE^l$ reduces to the FX construction \cite{kilian2001}, which has been extensively studied in both classical and quantum contexts. For $l=2$, the 2XOR and 3XOR constructions have been primarily analyzed in quantum settings \cite{bonnetain2022}.

\noindent\textbf{Cryptanalysis model.} When analyzing cipher security, the definition of a cryptanalysis model is essential. For KLE techniques, this analysis involves two key dimensions.
\begin{description}
  \item[(1) Adversarial Capabilities.] Three primary models are considered \cite{zhandry2012,liang2021}.
        \begin{itemize}
          \item Classical Model. Attackers are restricted to online classical queries and offline classical computations.
          \item Quantum Q1 Model. Attackers can perform offline quantum computations but are limited to online classical queries.
          \item Quantum Q2 Model. Attackers have full quantum capabilities, including local quantum computations and superposition queries to remote quantum oracles.
        \end{itemize}
        The classical model imposes the weakest constraints, while Q2 grants the strongest adversarial power. Our study evaluates security in Q1 and Q2 models, comparing results against the classical baseline.
  \item[(2) Complexity Metrics.] There are two standard models for measuring attack complexity \cite{dinur2016}.
        \begin{itemize}
          \item Information-Theoretic Model. Complexity is quantified by the total number of queries to the block cipher and KLE construction, ignoring real computational costs.
          \item Computational Model. Complexity is measured by operational counts, often including memory requirements.
        \end{itemize}
        Prior work \cite{gazi2013,lee2013,gazi2015} analyzed $CE^l$ and $XCE^l$ in the information-theoretic model, showing query complexity approaching $2^{\kappa+n}$ as $l$ increases (where $\kappa$ and $n$ denote key and block lengths, respectively). This work focuses exclusively on the computational model for quantum attacks.
\end{description}

\noindent\textbf{Security of cascade encryption.} The meet-in-the-middle (MITM) attack remains a pivotal approach in cryptanalyzing cascade structures, with prior research spanning classical and quantum frameworks.

For double encryption $CE^2 (M)=E_{k_2}(E_{k_1}(M))$, $k_1,k_2\in\{0,1\}^\kappa$, $M\in\{0,1\}^n$, classical MITM attacks exhibit $O(2^\kappa)$ complexity \cite{diffie1977}, matching the underlying block cipher's brute-force bound. This confirms $CE^2$ provides no asymptotic security gain in classical model\footnote{Ref.\cite{jaeger2021} proved that double encryption can enhance the quantum security in Q2 model.}, necessitating $l\geq 3$ for $CE^l$ in classical settings.

A classical MITM extension \cite{merkle1981} cracks $CE^3$ in $2^{2\kappa}$ time using $2^\kappa$ memory. In Q1 model, Zhong and Bao's quantum MITM attack \cite{zhong2010} targets 3DES and general $CE^3$, achieving $O(2^\kappa)$ time/QRAM. For the two-key variant ($k_3=k_1$), its complexity aligns with Grover-based brute-force attack($O(2^\kappa)$), so their quantum attack fails for the two-key variant.

For $CE^r (r\geq 4)$, MITM attack is outperformed significantly by dissection attack. By classical dissection attack \cite{dinur2012}, $CE^4$ can be broken in time $O(2^{n+\kappa})$ with $O(2^\kappa)$ classical memory. Kaplan \cite{kaplan2014} proposed quantum dissection attack in Q1 model, which further reduces this to $O(2^{n/2+2\kappa/3})$ time and $O(2^{2\kappa/3})$ QRAM.

We focus on two-key triple encryption (2kTE), and propose new quantum MITM attacks that would be better than Grover-based brute-force attack.

\noindent\textbf{Security of XOR-cascade encryption.} Prior studies on $XCE^l$ in the information-theoretic model are detailed in \cite{gazi2013,lee2013,gazi2015}, while this work focuses exclusively on complexity analysis in the computational model.

As a pivotal KLE technique, the FX construction has been studied in both classical and quantum contexts. Leander and May \cite{leander2017} introduced a quantum attack combining Grover's and Simon's algorithms, later optimized by Bonnetain et al. \cite{bonnetain2019} to break FX in $O(n^3 2^{\kappa/2})$ time under Q2 model. This demonstrates FX offers no substantial security gain over the underlying block cipher. When adapted to Q1 model, the attack satisfies $T^2 D=O(2^{n+\kappa})$ with $D\leq 2^n$ ($T$ and $D$ are the time complexity and data complexity, respectively.), matching the optimal security bound in Q1 model \cite{jaeger2021,guo2024}.

Proposed by Ga\v{z}i and Tessaro\cite{gazi2012}, 2XOR features double block encryption with two XOR operations, defined as $2XOR_{k,z}(M)=E_{\pi(k)}(E_k (M\oplus z)\oplus z)$ ($\pi$ is a public fixpoint-free permutation). Bonnetain et al. \cite{bonnetain2022} showed that 2XOR's quantum attack complexity in Q1/Q2 model is the same as that of FX, leveraging similar techniques from \cite{bonnetain2019}. Notably, appending a third whitening key (yielding $2XOR_{k,z}(M)\oplus z$) may invalidate this attack, prompting the variant denoted as 3XOR in \cite{gazi2015} due to three XOR operations.

As a specific case of $XCE^2$ with three XOR operations\footnote{Both 2XOR and 3XOR are the special instances of $XCE^2$, where the third whitening key is set to be zero or not.}, 3XOR-cascade encryption (3XCE) is the focus of our quantum cryptanalysis, for which we present novel cryptanalytic techniques and complexity bounds.

\noindent\textbf{Organization of the paper.} Section \ref{sec2} introduces fundamental notations and reviews Grover's quantum search algorithm and the quantum claw-finding algorithm. Section \ref{sec3} focuses on the quantum security of 2kTE, presenting two quantum MITM attacks under the Q2 model. Section \ref{sec4} applies the Q2-model attacks from Section \ref{sec3} to analyze 3XCE, then devises a novel quantum MITM attack for 3XCE in the Q1 model. Section \ref{sec5} generalizes the quantum MITM framework to a quantum sieve-in-the-middle attack.

\section{Preliminaries}\label{sec2}
\subsection{Notations}
Denote $[t]$ as the set $\{i\in\mathbb{N}|1\leq i\leq t\}$. Denote $|X|$ as the size of the set $X$. Given two sets $X,Y$, let $X\times Y=\{(x,y)|x\in X,y\in Y\}$. $x\in X\backslash Y$ represents $x\in X$ and $x\notin Y$. Define ``$\parallel$" as the concatenation of two bit-strings, then $x\parallel y\in\{0,1\}^{2n}$ if $x,y\in\{0,1\}^n$. $\oplus_k(\cdot)$ represents a function which maps the input $x$ to $x\oplus k$. Given two functions $F,P$ with matching domain and range, denote $F\circ P$ as the composition of $F$ and $P$.

A block cipher is a family of functions $E:\{0,1\}^\kappa \times \{0,1\}^n\rightarrow \{0,1\}^n$, where $E(k,\cdot)$ is a permutation on $\{0,1\}^n$ for any given key $k\in\{0,1\}^\kappa$. Denote $BC(\kappa,n)$ as the set of all block ciphers with $\kappa$-bit key and $n$-bit block. For simplify our description, $E(k,m)$ is also written as $E_k (m)$ somewhere, and its corresponding decryption is denoted as $D(k,c)$ or $D_k (c)$ or $E_k^{-1}(c)$.

Given a function $f:X\rightarrow Y$, its quantum Oracle is denoted as $O_f$, which implements quantum transformation $O_f:|x\rangle|z\rangle\rightarrow|x\rangle|z\oplus f(x)\rangle$ or $O_f:|x\rangle\rightarrow (-1)^{f(x)}|x\rangle$.

\subsection{Grover's Quantum Search Algorithm}
In quantum cryptanalysis, Grover's algorithm \cite{grover1996} is one of the mostly used tools for analyzing quantum security.
Given an unsorted space of size $2^n$, we want to search one element $x$ such that $f(x)=1$. If there are $M$ elements that can satisfy $f(x)=1$, $O(2^n/M)$ queries to $f$ are necessary on average while classical algorithm is used. However, Grover's algorithm can solve it using only $O(\sqrt{2^n/M})$ queries to $f$. That means Grover's algorithm can solve the problem with quadratic speedup than classical algorithm. Moreover, Grover's algorithm has been proved in theory to achieve the optimal complexity bound \cite{zalka1999}.

Grover's algorithm can be described as three steps.
\begin{description}
  \item[(1)] Prepare the initial quantum state $|s\rangle=\frac{1}{2^{n/2}}\sum_{x\in\{0,1\}^n}|x\rangle$;
  \item[(2)] Repeat quantum oracle $O_f$ and unitary transformation $U_0$ for $\lceil\frac{\pi}{4}\sqrt{2^n/M}\rceil$ times, where
                \begin{eqnarray*}
                &&O_f:|x\rangle\rightarrow (-1)^{f(x)}|x\rangle,\forall x\in\{0,1\}^n,\\
                &&U_0=H^{\otimes n}(2|0^n\rangle\langle 0^n|-I)H^{\otimes n}.
                \end{eqnarray*}
  \item[(3)] Perform quantum measurement and obtain $x'\in\{x\in\{0,1\}^n |f(x)=1\}$ with high probability $O(1)$.
\end{description}

In Grover's algorithm, the quantum blackbox $O_f$ is the only component dependent on the specific problem. When applying the algorithm to cryptanalysis, a specific function $f$ must be defined based on both the quantum attack strategy and the cryptographic construction. Crucially, it is essential to ensure that the oracle $O_f$ remains accessible to an attacker without knowledge of the secret key.

\subsection{Quantum Claw-Finding Algorithm}\label{sec23}
Let us consider two functions $f:X\rightarrow Z$ and $g:Y\rightarrow Z$. If there exists a pair $(x,y)\in X\times Y$ such that $f(x)=g(y)$, then it is called a claw of the functions $f$ and $g$. Now we define the claw-finding problem as follows.

\noindent\textbf{Problem 1 (Claw-finding problem).} Assume the two functions $f:X\rightarrow Z$ and $g:Y\rightarrow Z$ are given as oracles, find a claw of $f$ and $g$.

In this article ,we adapt the quantum claw-finding (QCF) algorithm based on quantum random walks, which is referred to \cite{zhang2005,tani2007}. If $\sqrt{|X| }\leq |Y|<|X|^2$, QCF algorithm can solve claw-finding problem in time $O((|X|\cdot|Y|)^{1/3})$ using QRAM $O((|X|\cdot|Y|)^{1/3})$; If $|Y|\geq |X|^2$, QCF algorithm can solve claw-finding problem in time $O(|Y|^{1/2})$ using QRAM $O(|Y|^{1/2})$.

Numerous cryptanalysis problems can be reduced to the claw-finding problem, enabling the QCF algorithm to be applied for cipher attacks \cite{kaplan2014,jaques2019}. In this work, the QCF algorithm is employed in the quantum MITM attack. As mandated by the QCF algorithm, two specific functions $f$ and $g$ must be defined based on the cryptographic construction, with the assurance that the oracles $O_f$ and $O_g$ are accessible to an attacker lacking knowledge of the secret key.

\section{Two-Key Triple Encryption}\label{sec3}
The 2kTE scheme (see Fig. \ref{fig1}) can be described as follow.
\begin{equation}
2kTE_{k_1,k_2}(m)=E_{k_1}(E_{k_2}(E_{k_1}(m))),
\end{equation}
where $E\in BC(\kappa,n)$ is underlying block cipher, $k_1,k_2\in\{0,1\}^\kappa$ are two independent keys. The underlying block cipher $E$ is called three times, where the key $k_1$ is used in the first and third calls, and the key $k_2$ is used only in the second call.
\begin{figure}[h]
\centering
  \includegraphics[scale=0.23]{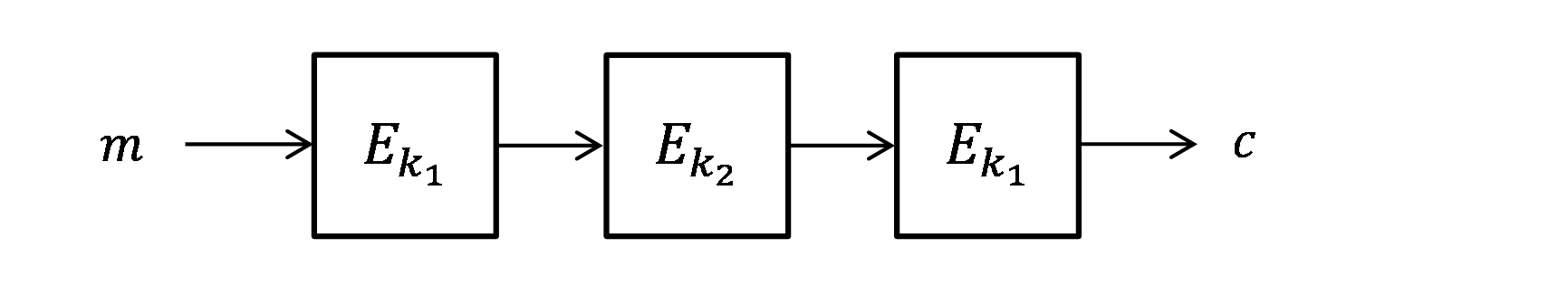}
\caption{Two-key triple encryption scheme.}
\label{fig1}       
\end{figure}

In this section, we present two types of quantum MITM attacks against the 2kTE scheme under the Q2 model, leveraging the QCF algorithm and Grover's algorithm, respectively.

\subsection{Quantum Attack Based on Quantum Claw-Finding Algorithm}\label{sec31}
Our first quantum MITM attack against 2kTE is devised by leveraging the QCF algorithm. The core strategy involves reducing the MITM attack on 2kTE to a claw-finding problem, which is then solved using quantum techniques.

We first define two functions $f,g:\{0,1\}^\kappa \rightarrow \{0,1\}^{tn}$ as follows.
\begin{eqnarray}
f(x)=&&\left(E_x^{-1}\left(2kTE_{k_1,k_2}\left(E_x^{-1}(1)\right)\right)\parallel E_x^{-1}\left(2kTE_{k_1,k_2}\left(E_x^{-1}(2)\right)\right)\parallel\cdots\right.\nonumber\\
 &&\left.\cdots\parallel E_x^{-1}\left(2kTE_{k_1,k_2}\left(E_x^{-1}(t)\right)\right)\right),\\
g(y)=&&\left(E_y(1)\parallel E_y(2)\parallel\cdots\parallel E_y(t)\right),
\end{eqnarray}
where $t=\lceil 2\kappa/n\rceil+1$, and $(k_1,k_2)$ denote the true keys of the 2kTE scheme.

Next we prove there exists a unique claw $(k_1,k_2)$ of $f,g$ with high probability. This assures that the claw found by QCF algorithm is the true key with high probability.

\begin{proposition}\label{prop1}
The true keys $(k_1,k_2)$ of 2kTE scheme must be a claw such that $f(k_1)=g(k_2)$.
\end{proposition}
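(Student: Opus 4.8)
The plan is to verify the claim by direct substitution, exploiting the fact that each $E_k$ is a permutation so that $E_{k_1}$ and $E_{k_1}^{-1}$ cancel wherever they sit adjacent. Since the statement asserts only the \emph{existence} of a claw at $(k_1,k_2)$ (uniqueness is treated separately in the surrounding text), the argument reduces to a term-by-term computation of $f(k_1)$.

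First I would isolate the $i$-th block of $f(k_1)$, namely $E_{k_1}^{-1}\bigl(2kTE_{k_1,k_2}(E_{k_1}^{-1}(i))\bigr)$, and expand the inner $2kTE_{k_1,k_2}$ according to its definition, obtaining
\begin{equation*}
E_{k_1}^{-1}\Bigl(E_{k_1}\bigl(E_{k_2}(E_{k_1}(E_{k_1}^{-1}(i)))\bigr)\Bigr).
\end{equation*}
Next I would apply the permutation property of the block cipher twice: the innermost pair gives $E_{k_1}(E_{k_1}^{-1}(i))=i$, and the outermost pair gives $E_{k_1}^{-1}(E_{k_1}(c))=c$ for every $c$. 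Both cancellations collapse the expression to $E_{k_2}(i)$. Because this holds for each $i\in[t]$, concatenating the $t$ blocks yields $f(k_1)=E_{k_2}(1)\parallel\cdots\parallel E_{k_2}(t)$, which is precisely $g(k_2)$ by definition; hence $(k_1,k_2)$ is a claw.

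There is no genuine obstacle in this proposition: the computation is routine once one observes that, when $x=k_1$, the outer decryption $E_x^{-1}$ peels off the outer $E_{k_1}$ of the construction while the inner decryption $E_x^{-1}$ is absorbed by the inner $E_{k_1}$, leaving the single middle layer $E_{k_2}$. The only point requiring minor care is to invoke the bijectivity of $E_k$ for a fixed key, rather than any structure on the keys themselves. The substantive difficulty lies not here but in the \emph{uniqueness} of the claw promised just before the statement, where the choice $t=\lceil 2\kappa/n\rceil+1$ is needed so that the $tn\geq 2\kappa+n$ output bits make an accidental collision $f(x)=g(y)$ with $(x,y)\neq(k_1,k_2)$ occur only with exponentially small probability over the random block cipher.
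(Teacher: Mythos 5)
Your proposal is correct and follows essentially the same route as the paper's proof: substitute $x=k_1$, expand $2kTE_{k_1,k_2}$, and cancel the inner and outer $E_{k_1}^{-1}\!/E_{k_1}$ pairs to reduce each block to $E_{k_2}(i)$, giving $f(k_1)=g(k_2)$. Your added remarks correctly identify that the only substantive issue (uniqueness of the claw) is deferred to Proposition \ref{prop2}.
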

\begin{proof}
According to the definitions of $2kTE_{k_1,k_2}$ and $f,g$, it can be easily induced as follow. If $x=k_1$,
$$E_x^{-1}(2kTE_{k_1,k_2}(E_x^{-1}(i)))=E_x^{-1} \left(E_{k_1}\left(E_{k_2}\left(E_{k_1}\left(E_x^{-1} (i)\right)\right)\right)\right)=E_{k_2}(i),\forall i\in[t].$$
Thus we have $f(k_1)=g(k_2)$.
\end{proof}

\begin{proposition}\label{prop2}
If $t=\lceil 2\kappa/n\rceil+1$, there exists a claw $(x,y)$ ($(x,y)\neq (k_1,k_2)$) of $f$ and $g$ with probability at most $2^{-n+1}$.
\end{proposition}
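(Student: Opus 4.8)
The plan is to argue in the ideal-cipher model, treating the permutations $\{E_k\}_{k\in\{0,1\}^\kappa}$ as independent and uniformly random; the probability in the statement is then taken over this choice of $E$, with the true keys $k_1,k_2$ fixed. First I would recast the claw condition so the randomness is visible. For each fixed $x$ the first argument of $f$ defines a permutation $h_x:=E_x^{-1}\circ 2kTE_{k_1,k_2}\circ E_x^{-1}$ of $\{0,1\}^n$, while $g$ evaluated at its arguments is simply $E_y$. Thus $(x,y)$ is a claw exactly when $h_x(i)=E_y(i)$ for all $i\in[t]$, i.e. when the permutations $h_x$ and $E_y$ agree on the $t$ distinct points $1,\dots,t$. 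Since Proposition~\ref{prop1} already supplies the claw $(k_1,k_2)$, I only need to bound the chance that some other pair collides, which I would do by a union bound
\[ \Pr[\text{a spurious claw exists}]\ \le\!\!\!\sum_{(x,y)\neq(k_1,k_2)}\!\!\!\Pr\bigl[h_x(i)=E_y(i)\ \forall i\in[t]\bigr]. \]

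Next I would estimate a single summand, the target being $\approx 2^{-nt}$. In the generic case $y\notin\{x,k_1,k_2\}$ the permutation $E_y$ is independent of $h_x$; conditioning on $h_x$, the values $h_x(1),\dots,h_x(t)$ are $t$ fixed distinct strings, and the probability that the random permutation $E_y$ maps the $t$ distinct inputs $1,\dots,t$ onto exactly those strings equals $\frac{(2^n-t)!}{(2^n)!}=\prod_{j=0}^{t-1}(2^n-j)^{-1}$. Because $t=O(\kappa/n)$ is tiny compared with $2^{n/2}$, this product is at most $2^{-nt}(1-t/2^n)^{-t}\le 2\cdot 2^{-nt}$, giving a per-pair bound of $2\cdot 2^{-nt}$.

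The main obstacle is the non-generic pairs, where $E_y$ shares randomness with $h_x$, namely $y\in\{x,k_1,k_2\}$ (at most three values of $y$ per $x$, hence $O(2^{\kappa})$ pairs). Some of these collapse pleasantly: when $x=k_1$ one has $h_{k_1}=E_{k_2}$, so the condition becomes ``$E_{k_2}$ and $E_y$ agree on $[t]$'', again a pair of independent permutations whenever $y\neq k_2$. For the genuinely coupled pairs ($y=x$, or $x=k_2$ so that $E_x$ reappears inside $h_x$) I would condition on the whole cipher except one fresh permutation and verify that the surviving constraints still pin down its outputs at $t$ distinct inputs, so that the probability remains $\le 2\cdot 2^{-nt}$; checking that the relevant input points stay distinct in each coupling is the delicate bookkeeping I expect to be the crux. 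Granting this uniform per-pair bound, summing over the at most $2^{2\kappa}-1$ spurious pairs yields $\le 2\cdot 2^{2\kappa}\cdot 2^{-nt}=2^{2\kappa+1-nt}$, and the choice $t=\lceil 2\kappa/n\rceil+1$ forces $nt\ge 2\kappa+n$, so the total is at most $2^{2\kappa+1-(2\kappa+n)}=2^{-n+1}$, exactly as claimed. The two points that require genuine care are this dependency analysis and keeping the factorial-to-power constant at most $2$, which is what turns $2^{-nt}$-style summands into the clean exponent $-n+1$.
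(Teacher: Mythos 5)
Your proposal is correct and follows essentially the same route as the paper: a union bound over the roughly $2^{2\kappa}$ candidate key pairs, each of which collides on $t$ blocks with probability about $2^{-tn}$, so that $t=\lceil 2\kappa/n\rceil+1$ forces $tn\ge 2\kappa+n$ and the total is at most $2^{-n+1}$. The paper packages the same count slightly differently -- it bounds the event $g(y)\in S_f$ with $|S_f|\le 2^\kappa$, once for $y\ne k_2$ and once for $x\ne k_1$, each contributing $2^{-n}$ -- and it is noticeably \emph{less} rigorous than you are: it simply declares $f(x)$ and $g(y)$ to be independent uniform values in $\{0,1\}^{tn}$ and applies a birthday-style approximation, so the dependency analysis you identify as the crux (the pairs with $y\in\{x,k_1,k_2\}$, where $E_y$ shares randomness with $h_x$) is not addressed in the paper at all. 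Your sketch of that step is sound in outline: there are only $O(2^\kappa)$ such coupled pairs, so even a much weaker per-pair bound (say, losing a factor $2^{n}$ to lazy-sampling collisions when the same permutation appears on both sides, as in $y=x$) still leaves their total contribution at most about $2^{\kappa+n-tn}\le 2^{-\kappa}$, which is absorbed by the stated bound. So the only unfinished piece of your argument is bookkeeping the paper itself never performs; one small slip worth noting is that the case $x=k_2$ with generic $y$ is not actually coupled (the coupling that matters is between $E_y$ and $h_x$, and $h_{k_2}$ involves only $E_{k_1},E_{k_2}$), so it is already covered by your independent-permutation estimate.
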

\begin{proof}
For an ideal block cipher $E:\{0,1\}^\kappa\times\{0,1\}^n\rightarrow\{0,1\}^n$, if the plaintext $i\in[t]$ is fixed and the key $y\in\{0,1\}^\kappa$ is an unknown variable, the ciphertext $E_y (i)$ can be treated as a random value in the space $\{0,1\}^n$, with all values $E_y (i),\forall i\in[t]$ being independent. Consequently, $g(y)$ constitutes a random value in $\{0,1\}^{tn}$.

Given the true key $(k_1,k_2)$, $2kTE_{k_1,k_2}(\cdot)$ constitutes a fixed permutation. By the definition of $f(x)$, $E_x^{-1}(i)$ for all $i\in[t]$ can be treated as a random value, implying that $f(x)$ itself is also a random value. As $x$ iterates over all elements in the space $\{0,1\}^\kappa$, computing $f(x)$ yields approximately $2^\kappa$ distinct values of length $tn$. Let $S_f=\{f(x)\in\{0,1\}^{tn}|x\in\{0,1\}^\kappa\}$; thus, $|S_f |\leq 2^\kappa$. From the definition of $S_f$, the existence of a claw for $f$ and $g$ is equivalent to the existence of some $y$ such that $g(y)\in S_f$.

By Proposition \ref{prop1}, $g(k_2)$ is necessarily an element of $S_f$. For each $y\in\{0,1\}^\kappa\backslash\{k_2\}$, the probability that $g(y)\in S_f$ is $\frac{|S_f|}{2^{tn}}$. Thus, the probability that there exists some $y\in\{0,1\}^\kappa\backslash\{k_2\}$ such that $g(y)\in S_f$ is approximated by $1-\left(1-\frac{|S_f|}{2^{tn}}\right)^{2^\kappa-1}\approx \frac{|S_f|(2^\kappa-1)}{2^{tn}} \leq 2^{2\kappa-tn}$. When $t=\lceil 2\kappa/n\rceil+1$, the probability of there existing a claw $(x,y)$ with $y\neq k_2$ such that $f(x)=g(y)$ is at most $2^{-n}$.

Analogously, we can show that the probability of there existing a claw $(x,y)$ with $x\neq k_1$ satisfying $f(x)=g(y)$ is at most $2^{-n}$. Consequently, the probability of there being another claw $(x,y)$ distinct from $(k_1,k_2)$ is at most $2^{-n+1}$. Thus concludes the proof.
\end{proof}

By Propositions \ref{prop1} and \ref{prop2}, the true key $(k_1,k_2)$ is essentially the unique claw satisfying $f(k_1)=g(k_2)$ when $t=\lceil 2\kappa/n\rceil+1$. The QCF-based MITM attack thus proceeds as follows.
\begin{description}
  \item[(1)] Define the functions $f$ and $g$, reducing the MITM attack to a claw-finding problem for $f$ and $g$. Implement the quantum Oracles $O_f$ and $O_g$ using quantum queries to $2kTE_{k_1,k_2}$ and the quantum circuits of $E$ and $E^{-1}$.
  \item[(2)] Execute the QCF algorithm using the quantum Oracles $O_f$ and $O_g$, and finally retrieve a claw $(x,y)$. The tuple $(x,y)$ corresponds to the true key $(k_1,k_2)$ with high probability.
  \item[(3)] Verify the correctness of $(x,y)$ using a fresh plaintext-ciphertext pair $(m',c')$. If $2kTE_{x,y} (m')=c'$, output $(x,y)$ as the final key.
\end{description}

The theoretical feasibility of this attack hinges on the attacker's access to quantum Oracles $O_f$ and $O_g$. Since $g(y)$ is defined by the block cipher $E_y(\cdot)$, $O_g$ can be implemented via $E$'s quantum circuit. In contrast, $f(x)$ depends on both $E_x^{-1}(\cdot)$ and $2kTE_{k_1,k_2}(\cdot)$, where $2kTE_{k_1,k_2}(\cdot)$ requires online quantum queries (as $k_1,k_2$ are unknown). This restricts the attack to the Q2 model, which permits quantum superposition queries to $2kTE_{k_1,k_2}(\cdot)$.

The attack complexity is dominated by the QCF algorithm. As detailed in Section \ref{sec23}, the QCF-based MITM attack achieves $O(2^{2\kappa/3})$ time and $O(2^{2\kappa/3})$ QRAM in Q2 model -- significantly outperforming classical MITM ($O(2^\kappa)$ \cite{merkle1981}) and Grover-based brute-force attacks ($O(2^\kappa)$), which demonstrates its efficiency gain in the Q2 model.

\subsection{Quantum Attack Based on Grover's Algorithm}\label{sec32}
In Section \ref{sec31}, the MITM attack on 2kTE is reduced to a claw-finding problem, which is then solved using QCF algorithm. In this section, we substitute Grover's algorithm for QCF to devise a novel quantum MITM attack, which necessitates higher QRAM resources. A time-QRAM tradeoff variant is also presented to optimize resource consumption.
\subsubsection{Quantum Attack with Unbounded QRAM}\label{sec321}
The Grover-based quantum MITM attack requires substantial QRAM to store classical preprocessing results. The core approach is as follows: First, the attacker precomputes and stores a table $L=\{(g(y),y)|y\in\{0,1\}^\kappa\}$; Then, the attacker searches for an $x\in\{0,1\}^\kappa$ such that $f(x)\in L^1$, where $L^1=\{g(y)|y\in\{0,1\}^\kappa\}$ denotes the set of first components of $L$.

The detailed process is described as follows.
\begin{description}
  \item[(1)] Classical preprocessing.
        \begin{description}
          \item[(1a)] Perform $t2^\kappa$ sequential encryptions $E_y(i)$ for all $y\in\{0,1\}^\kappa$ and $i\in[t]$, and store results in table $L=\{(g(y),y)|y\in\{0,1\}^\kappa\}$;
          \item[(1b)] Sort $L$ by the values of $g(y)$, retaining the notation $L$ for the sorted table.
        \end{description}
  \item[(2)] Key-searching process. It consists of two steps.
        \begin{description}
          \item[(2a)] Execute Grover's algorithm for $f(x)$ and $L^1$, which is denoted as $GroverSearch(f,L^1)$. It should find $x\in\{0,1\}^\kappa$ such that $f(x)\in L^1$. In other words, it would find $x$ such that $F(x)=1$, where $F:\{0,1\}^\kappa\rightarrow \{0,1\}$ is defined as follow.
               \begin{equation}
                F(x)=\left\{
                  \begin{array}{ll}
                    1, & \hbox{if } f(x)\in L^1; \\
                    0, & \hbox{others.}
                  \end{array}
                \right.
               \end{equation}
          \item[(2b)] If Grover's algorithm outputs a value $x'$ such that $f(x')\in L^1$, the attacker executes classical search algorithm to find the element $(g(y'),y')\in L$ such that $g(y')=f(x')$, and then outputs $(x',y')$ as a candidate key.
        \end{description}
  \item[(3)] Verify the correctness of $(x',y')$ using a fresh plaintext-ciphertext pair $(m',c')$. If $2kTE_{x',y'} (m')=c'$, output $(x',y')$ as the final key.
\end{description}

The theoretical feasibility of this attack hinges on the attacker's ability to access the quantum oracle $O_F$. Implementing $O_F$ requires satisfying two conditions: (1) The table $L$ (or $L^1$) must be stored in QRAM to enable quantum-level membership checks for $f(x)\in L^1$; (2) The computation of $f(x)$ must support quantum superposition, allowing the attacker to evaluate $O_F$ in superposition. Because $f(x)$ depends on both $E_x^{-1}(\cdot)$ and $2kTE_{k_1,k_2}(\cdot)$, and $2kTE_{k_1,k_2}(\cdot)$ can only be queried online due to unknown keys $k_1,k_2$, the attack is feasible exclusively in the Q2 model, which permits quantum superposition queries to $2kTE_{k_1,k_2}(\cdot)$.

Analogous to the reasoning in Section \ref{sec31}, when $t=\lceil 2\kappa/n\rceil+1$, the true key $(k_1,k_2)$ is essentially the unique claw satisfying $f(k_1)=g(k_2)$. Consequently, the candidate $(x',y')$ obtained in Step \textbf{(2)} matches the true key with high probability, ensuring the attack's high success rate.

Finally, we analyze the complexity of the aforementioned quantum MITM attack in detail.

The classical preprocessing procedure \textbf{(1)} consists of two steps: \textbf{(1a)} and \textbf{(1b)}. In Step \textbf{(1a)}, there are $t2^\kappa$ instances of block encryption; In Step \textbf{(1b)}, the attacker sorts the table $L$ of size $|L|=2^\kappa$, with a time complexity of $O(\kappa 2^\kappa)$. Hence, the time complexity of classical preprocessing procedure is
$O(t2^\kappa)+O(\kappa 2^\kappa)\approx O(\kappa 2^\kappa)$, where $t$ is typically negligible compared to $\kappa$.

In Step \textbf{(2)}, to execute the quantum algorithm $GroverSearch(f,L^1)$, the attacker must store the table $L^1$ in QRAM. Since $|L^1|=2^\kappa$, the algorithm requires QRAM of $O(2^\kappa)$. When there exists a unique claw $(k_1,k_2)$ satisfying $f(k_1)=g(k_2)$, there is exactly one $x$ in space $\{0,1\}^\kappa$ for which $F(x)=1$. Thus, the Grover iteration must be repeated $\lceil\frac{\pi}{4}\sqrt{2^\kappa}\rceil$ times. During each Grover iteration, the primary cost stems from the validating $f(x)\in L^1$. For a sorted table $L^1$, the validation complexity is $O(log|L^1|)=O(\kappa)$. Hence, the time complexity of Step \textbf{(2a)} is $O(\kappa 2^{\kappa/2})$. In Step \textbf{(2b)}, the attacker performs a classical search algorithm on the sorted table $L$, with a time complexity of $O(\kappa)$. The Step \textbf{(2)} can be completed in $O(\kappa 2^{\kappa/2})$ time.

In Step \textbf{(3)}, the attacker merely verifies the correctness of a candidate key, and its time complexity is negligible.

From the foregoing analysis, the classical preprocessing procedure concludes in $O(\kappa 2^\kappa)$ time. Notably, this procedure is independent of the true key $(k_1,k_2)$ and depends solely on the underlying block cipher, so this procedure can be completed at the first time of this attack. Once the table $L$ is fully constructed, it can be reused in subsequent attacks without requiring re-preprocessing. Consequently, preprocessing complexity is excluded from the overall quantum attack analysis.
The complexity of the aforementioned quantum MITM attack is dominated by Step \textbf{(2)}, achieving a time complexity of $O(\kappa 2^{\kappa/2})$ with $O(2^\kappa)$ QRAM.

Classical MITM \cite{merkle1981} and Grover-based brute-force attacks exhibit $O(2^\kappa)$ time complexity, while the QCF-based MITM attack in Section \ref{sec31} achieves $O(2^{2\kappa/3})$. The Grover-based MITM attack reduces time complexity to $O(\kappa 2^{\kappa/2})$ but requires $O(2^\kappa)$ QRAM -- substantially exceeding the time cost. To address this issue, we propose a time-QRAM tradeoff scheme.

\subsubsection{A Time-QRAM Tradeoff Variant}
Based on the quantum attack in Section \ref{sec321}, the required QRAM memory size is $O(2^\kappa)$, and the quantum time complexity is $O(\kappa 2^{\kappa/2})$. Given the current lack of effective QRAM implementations, table $L$ must be stored in qubits for use in the quantum algorithm. This incurs significant costs due to the extremely large size of $|L|$. To optimize QRAM usage, we propose a quantum attack with a tradeoff between time complexity and QRAM. The core concept is to partition table $L$ into $r$ sub-tables of equal size, which are sequentially utilized in multiple executions of Grover's algorithm. The detailed attack is as follows. To simplify the description, we only elaborate on the differences from the previous quantum attack.
\begin{description}
  \item[(1)] Classical preprocessing procedure. It comprises three steps, where steps \textbf{(1a)} and \textbf{(1b)} remain identical to the attack in Section \ref{sec321}, and step \textbf{(1c)} is as follows.
        \begin{description}
          \item[(1c)] Partition table $L$ into $r$ sub-tables $L_i(i\in[r])$ of equal size, such that $|L_i|=2^\kappa/r$. Define $L_i^1$ as the table comprising the first value of each element in $L_i$, i.e., $L_i^1=\{g(y)|(g(y),y)\in L_i\}$.
        \end{description}
  \item[(2)] Key-searching process. This process mirrors the previous approach but executes the following step \textbf{(2a')} in place of \textbf{(2a)}.
        \begin{description}
          \item[(2a')] For $i=1,2,\ldots,r$, sequentially perform $GroverSearch(f,L_i^1)$ until an $x\in\{0,1\}^\kappa$ is found such that $f(x)$ resides in one of the sub-tables $L_i^1$ for $i\in[r]$.
        \end{description}
  \item[(3)] Identical to Step \textbf{(3)} in Section \ref{sec321}.
\end{description}

For the aforementioned tradeoff scheme, during each execution of Grover's algorithm $GroverSearch(f,L_i^1)$, only the sub-table $L_i^1$ is stored in QRAM, with $|L_i^1|=2^\kappa/r$. Thus, this attack requires QRAM of size $M=O(2^\kappa/r)$.

When there exists a unique claw $(k_1,k_2)$ satisfying $f(k_1)=g(k_2)$, there is exactly one $x$ in $\{0,1\}^\kappa$ such that $f(x)\in L^1$, and $f(x)$ resides in exactly one sub-table $L_j^1$ for $j\in[r]$. In the proposed attack, each execution of $GroverSearch(f,L_i^1)$ searches for a target within $\{0,1\}^\kappa$,  with at most one target $x$ satisfying $f(x)\in L_i^1$. Thus, the Grover iteration repeats $\lceil\frac{\pi}{4}2^{\kappa/2}\rceil$ times. Additionally, Step \textbf{(2a')} invokes $GroverSearch(f,L_i^1)$ an unspecified number of times:
\begin{itemize}
  \item \textbf{Best case}: The first call finds $x$ such that $f(x)\in L_1^1$;
  \item \textbf{Worst case}: The $r$-th call finds $x$ such that $f(x)\in L_r^1$;
  \item \textbf{Average case}: $r/2$ calls are required.
\end{itemize}
Consequently, the average time complexity is $T=O(r2^{\kappa/2})$.

In summary, the tradeoff scheme requires QRAM of size $M=O(2^\kappa/r)$, with a time complexity of $T=O(r2^{\kappa/2})$. Setting $r=2^{\kappa/4}$ yields both QRAM and time complexity of $O(2^{3\kappa/4})$, outperforming the classical MITM attack \cite{merkle1981} and Grover-based brute-force attack -- both of which demand $O(2^\kappa)$ time. Compared to the QCF-based attack in Section \ref{sec31}, the tradeoff scheme exhibits higher time complexity and larger QRAM requirements. Nevertheless, the security analysis of 3XCE in Section \ref{sec4} indicates that the Grover-based MITM attack may surpass the QCF-based approach.

\subsection{Applications to Two-Key 3DES and Generalized Two-Key Triple Encryption}\label{sec33}
The quantum MITM attacks described in Sections \ref{sec31} and \ref{sec32} are applicable to attacking two-key 3DES and generalized 2kTE. We present the cryptanalysis results in detail.
\subsubsection{Two-Key 3DES}
3DES consists of two DES encryption calls and one DES decryption call. Two-key 3DES, a special case of 3DES with two independent keys, can be formally described as follows:
\begin{equation}
2k3DES_{k_1,k_2}(m)=DES_{k_1}(DES_{k_2}^{-1}(DES_{k_1}(m))),
\end{equation}
where $k_1,k_2\in\{0,1\}^{56}$ and $m\in\{0,1\}^{64}$.

Define two functions $f,g:\{0,1\}^\kappa\rightarrow \{0,1\}^{3n}$ as follows:
\begin{eqnarray}
f(x)=&&DES_x^{-1}\left(2k3DES_{k_1,k_2}\left(DES_x^{-1}(1)\right)\right)\parallel\cdots\nonumber\\
 &&\cdots DES_x^{-1}\left(2k3DES_{k_1,k_2}\left(DES_x^{-1}(2)\right)\right)\parallel\cdots\nonumber\\
 &&\cdots DES_x^{-1}\left(2k3DES_{k_1,k_2}\left(DES_x^{-1}(3)\right)\right),\\
g(y)=&& DES_y^{-1}(1)\parallel DES_y^{-1}(2)\parallel DES_y^{-1}(3).
\end{eqnarray}

Drawing from Merkle-Hellman's attack \cite{merkle1981}, the classical MITM attack on two-key 3DES exhibits a time complexity of $O(2^{56})$. When Grover's algorithm is employed for brute-force key search, the time complexity also amounts to $O(2^{56})$. Applying the quantum MITM attacks detailed in Sections \ref{sec31} and \ref{sec32} yields significant improvements in time complexity.

For the QCF-based MITM attack outlined in Section \ref{sec31}, two-key 3DES can be broken within the Q2 model with a time complexity of $O(2^{37})$ using $O(2^{37})$ QRAM. As for the Grover-based MITM attack in Section \ref{sec32}, breaking two-key 3DES in the Q2 model requires a time complexity of $O(2^{28})$ with $O(2^{56})$ QRAM. By optimizing QRAM through the tradeoff scheme (setting $r=2^{14}$), both the time complexity and QRAM size are reduced to $O(2^{42})$.

\subsubsection{Generalized Two-Key Triple Encryption}
For the generalized two-key triple encryption (G2kTE) scheme, all three encryption calls are independent, and the quantum MITM attacks in Sections \ref{sec31} and \ref{sec32} remain applicable.

The G2kTE scheme is defined as:
\begin{equation}
G2kTE_{k_1,k_2}(m)=E_{k_1}^3 \left(E_{k_2}^2 \left(E_{k_1}^1 (m)\right)\right),
\end{equation}
where $E^1,E^2,E^3\in BC(\kappa,n)$ denote three independent block ciphers, and $k_1,k_2\in\{0,1\}^\kappa$ are two independent secret keys. The first key $k_1$ is used in both $E^1$ and $E^3$, while the second key $k_2$ is used in $E^2$.

Define two functions $f,g:\{0,1\}^\kappa\rightarrow \{0,1\}^{tn}$ as follows:
\begin{eqnarray}
f(x)=&&D_x^{3}\left(G2kTE_{k_1,k_2}\left(D_x^{1}(1)\right)\right)\parallel D_x^{3}\left(G2kTE_{k_1,k_2}\left(D_x^{1}(2)\right)\right)\parallel\cdots\nonumber\\
 &&\cdots D_x^{3}\left(G2kTE_{k_1,k_2}\left(D_x^{1}(t)\right)\right),\\
g(y)=&& E_y^{2}(1)\parallel E_y^{2}(2)\parallel\cdots\parallel E_y^{2}(t),
\end{eqnarray}
where $t=\lceil 2\kappa/n\rceil+1$, $k_1,k_2$ are the true keys for the G2kTE scheme, and $D^i(i=1,2,3)$ are the decryption algorithms corresponding to the encryptions $E^i(i=1,2,3)$.

For the G2kTE scheme, the classical MITM attack \cite{merkle1981} exhibits a time complexity of $O(2^\kappa)$, where $\kappa$ denotes the length of $k_1$. When Grover's algorithm is employed for brute-force key search of $k_1$ and $k_2$, the time complexity is $O(\sqrt{2^{2\kappa}})=O(2^\kappa)$. In comparison to the classical MITM attack and Grover-based brute-force attack, the quantum MITM attacks detailed in Sections \ref{sec31} and \ref{sec32} yield significant improvements in time complexity.

For the QCF-based MITM attack in Section \ref{sec31}, the G2kTE scheme can be broken in the Q2 model with a time complexity of $O(2^{2\kappa/3})$ and QRAM of $O(2^{2\kappa/3})$. As for the Grover-based MITM attack in Section \ref{sec32}, breaking the G2kTE scheme in the Q2 model requires a time complexity of $O(2^{\kappa/2})$ and QRAM of $O(2^\kappa)$. By adopting the tradeoff scheme and setting $r=2^{\kappa/4}$, both the time complexity and QRAM size are reduced to $O(2^{3\kappa/4})$.

\section{3XOR-Cascade Encryption}\label{sec4}
The 3XOR-cascade encryption (3XCE) scheme comprises three whitening keys and two block encryption calls (see Fig. \ref{fig2}). It can be formalized as:
\begin{equation}
3XCE_{k,k_1,k_2}(m)=E_k^2(E_k^1(m\oplus k_1)\oplus k_2)\oplus k_1,
\end{equation}
where $E^1,E^2\in BC(\kappa,n)$ denote two independent block ciphers, $k\in\{0,1\}^\kappa$ and $k_1,k_2\in\{0,1\}^n$ are three independent secret keys. The key $k$ serves as the encryption key for both $E^1$ and $E^2$, while the other two keys $k_1,k_2$ are whitening keys.

\begin{figure}[h]
\centering
  \includegraphics[scale=0.23]{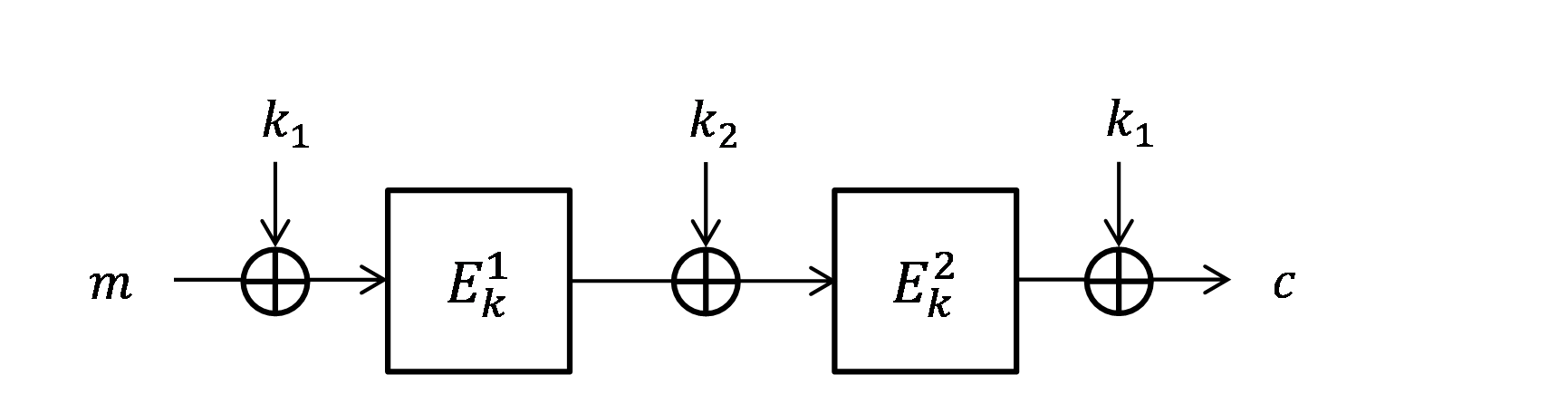}
\caption{3XOR-cascade encryption.}
\label{fig2}       
\end{figure}

The classical Merkle-Hellman's attack can be applied to analyze the 3XCE scheme, as 3XCE can be rewritten as Eq. (\ref{eqn3xce}). Consequently, the classical MITM attack \cite{merkle1981} can break 3XCE with a time complexity of $O(2^{\kappa+n})$, where $\kappa$ and $n$ denote the length of the keys $k$ and $k_1$, respectively. When Grover's algorithm is employed for brute-force key search, the time complexity is $O(2^{(\kappa+2n)/2})$, with $\kappa+2n$ being the total length of all secret keys.

We next propose quantum MITM attacks on $3XCE_{k,k_1,k_2}$ in both the Q1 and Q2 models. These attacks significantly reduce the time complexity compared to the classical MITM attack and Grover-based brute-force attack.

\subsection{Security Analysis in the Q2 Model}\label{sec41}
The 3XCE scheme can be regarded as a special case of the G2kTE scheme, and its analysis in the Q2 model can be conducted similarly to that of G2kTE. The detailed analysis is as follows.

According to the construction of the G2kTE scheme, we define:
\begin{eqnarray}
\tilde{E}_{k,k_1}^1 (x)&&=E_k^1 (x\oplus k_1),\\
\tilde{E}_{k_2}^2 (x)&&=\oplus_{k_2} (x)=x\oplus k_2,\\
\tilde{E}_{k,k_1}^3 (x)&&=E_k^2 (x)\oplus k_1.
\end{eqnarray}
Then the 3XCE scheme can be rewritten as:
\begin{equation}\label{eqn3xce}
3XCE_{k,k_1,k_2}(m)=\tilde{E}_{k,k_1}^3 \left(\tilde{E}_{k_2}^2 \left(\tilde{E}_{k,k_1}^1 (m)\right)\right).
\end{equation}
Thus, we can design quantum MITM attacks against 3XCE in the Q2 model by following a similar approach to that described in Section \ref{sec3}.

Although the 3XCE scheme can be characterized as a composition of three encryptions $\tilde{E}_{k,k_1}^1,\tilde{E}_{k_2}^2$ and $\tilde{E}_{k,k_1}^3$, the keys employed by these three components differ in size. Consequently, the attack complexity for 3XCE exhibits slight variations compared to that of G2kTE. We provide a detailed analysis of 3XCE within the Q2 model and compare it with other attack strategies.

For the QCF-based MITM attack: When $\kappa<n$, the time complexity and QRAM requirement are both $O((2^{\kappa+n}\cdot 2^n )^{1/3})=O(2^{(\kappa+2n)/3} )$; When $\kappa>n$, both the time complexity and QRAM size amount to $O(2^{(\kappa+n)/2})$.

For the Grover-based MITM attack: The attack achieves a time complexity of $O(2^{(\kappa+n)/2})$ with $O(2^n)$ QRAM; Using the tradeoff scheme with QRAM size $O(2^n/r)$, the time complexity reduces to $O(2^{(\kappa+n)/2}r)$. Setting $r=2^{(n-\kappa)/4}$ yields both time complexity and QRAM size of $O(2^{(3n+\kappa)/4})$.

The above analysis is summarized in Table.\ref{table1}, facilitating a comparison between the two quantum attack approaches. If $\kappa <n$, since $(\kappa +n)/2<(\kappa +2n)/3$, the Grover-based MITM attack outperforms the QCF-based variant; If $\kappa >n$, as $(3n+\kappa)/4<(\kappa +n)/2$, the tradeoff scheme proves superior to the QCF-based attack. Thus, the QCF-based MITM attack offers no advantage over the Grover-based MITM attack when targeting 3XCE in the Q2 model.
\begin{table}[h]
  \centering
  \caption{The attack complexity of the 3XCE scheme under the Q2 model.}\label{table1}
  \begin{tabular}{|l|c|c|c|}
    \hline
    Quantum MITM Attacks & Cases & Time Complexity &  QRAM \\
    \hline
    QCF-Based Attack & $\begin{array}{c} \kappa <n \\ \kappa >n \end{array}$ &  $\begin{array}{c} O(2^{(\kappa+2n)/3}) \\ O(2^{(\kappa+n)/2}) \end{array}$ &  $\begin{array}{c} O(2^{(\kappa+2n)/3}) \\ O(2^{(\kappa+n)/2}) \end{array}$ \\
    \hline
    Grover-Based Attack &  $\begin{array}{c} \hbox{Unbounded QRAM} \\ \hbox{Tradeoff Variant} \end{array}$ & $\begin{array}{c} O(2^{(\kappa+n)/2}) \\ O(2^{(\kappa+3n)/4}) \end{array}$ & $\begin{array}{c} O(2^n) \\ O(2^{(\kappa+3n)/4}) \end{array}$ \\
    \hline
  \end{tabular}
\end{table}

\subsection{Quantum Meet-in-the-Middle Attack in the Q1 Model}\label{sec42}
Notice that $\tilde{E}_{k_2}^2$ in Eq. (\ref{eqn3xce}) involves only an XOR operation ($\tilde{E}_{k_2}^2(x)=x\oplus k_2$). Thus, any input-output pair $(a,b)$ for $\tilde{E}_{k_2}^2$ satisfies $a\oplus b\equiv k_2$, where the key $k_2$ is an unknown fixed value for attackers. Leveraging this property, we propose a quantum MITM attack that requires no QRAM or classical preprocessing, suitable for targeting the 3XCE scheme in the Q1 model.

The core idea is as follows. Given $t=\lceil(\kappa+2n)/n\rceil+1$ plaintext-ciphertext pairs $\{(m_i,c_i),i\in[t]\}$, the attacker guesses keys $k$ and $k_1$ (denoted as $x$ and $y$ for guessing values), computes $a_i=E_x^1 (m_i\oplus y)$ and $b_i=D_x^2 (c_i\oplus y )$ for $i\in[t]$, and checks whether $a_i\oplus b_i=a_j\oplus b_j$ for all $1\leq i<j\leq t$. If any $i,j$ satisfies $a_i\oplus b_i\neq a_j\oplus b_j$, the guessed values $(x,y)$ are incorrect; otherwise, they are likely the true keys $k,k_1$ with high probability. Finally, two additional plaintext-ciphertext pairs $\{(m_i',c_i')|i\in\{1,2\}\}$ are used to verify the guess.

The detailed quantum MITM attack on 3XCE is as follows.
\begin{description}
  \item[(1)] Collect $t$ plaintext-ciphertext pairs $\{(m_i,c_i),i\in[t]\}$.
  \item[(2)] Execute Grover's algorithm to find $(k',k_1')$ such that $F(k',k_1')=1$, where the function $F:\{0,1\}^{\kappa+n}\rightarrow \{0,1\}$ is defined as:
               \begin{equation}
                F(x,y)=\left\{
                  \begin{array}{ll}
                    1, & \delta_i=\delta_j \texttt{ for all } i,j\in[t], \\
                    0, & \delta_i\neq\delta_j \texttt{ for some } i,j\in[t],
                  \end{array}
                \right.
               \end{equation}
        where $\delta_i = E_x^1 (m_i\oplus y)\oplus D_x^2 (c_i\oplus y)$ for $i\in[t]$. For each candidate key $(k',k_1')$, $\delta_i$ is computed from $(m_i,c_i)$ to check $F(x,y)=1$.
  \item[(3)] Verify candidate keys $(k',k_1')$ using two new pairs $\{(m_i',c_i'),i\in\{1,2\}\}$. Compute $\delta_i'=E_{k'}^1 (m_i'\oplus k_1')\oplus D_{k'}^2 (c_i'\oplus k_1')$ for $i\in\{1,2\}$. If $\delta_1'=\delta_2'$, set $k_2'=\delta_1'$ and return $(k',k_1',k_2')$; Otherwise, retry from Step \textbf{(1)}.
\end{description}

Compared with the attack in Section \ref{sec41}, this quantum MITM attack offers three advantages: 1) It requires no quantum queries to $3XCE_{k,k_1,k_2}$, making it applicable in the Q1 model; 2) It omits classical preprocessing, ensuring simplicity; 3) It eliminates QRAM requirements for storing preprocessed values.

The main complexity is attributed to Step \textbf{(2)}. Leveraging Grover's algorithm, the time complexity amounts to $O(2^{(\kappa+n)/2})$, which achieves a quadratic speedup over its classical counterpart ($O(2^{\kappa+n})$). When compared to the Grover-based brute-force attack ($O(2^{(\kappa+2n)/2})$), this attack in the Q1 model also demonstrates higher efficiency.

\section{Quantum Sieve-in-the-Middle Attack}\label{sec5}
The sieve-in-the-middle (SITM) attack \cite{canteaut2013} represents a variant of the MITM attack, applicable to a broader range of cryptographic constructions. Unlike the MITM attack, which relies on collisions between two intermediate states, the SITM attack sieves candidate keys using other easy-to-verify conditions, requiring the attacker to check whether intermediate states satisfy these conditions.

\subsection{General Framework}\label{sec51}
Inspired by the quantum MITM attack described in Section \ref{sec42}, we propose a general framework for the quantum SITM attack on the construction:
\begin{equation}
ELE_{K_1,K_2}=E_{K_1}^2\circ L_{K_2}\circ E_{K_1}^1,
\end{equation}
which generalizes the 3XCE construction. Here, $E^1$ and $E^2$ are two block ciphers parameterized by the key $K_1\in\mathcal{K}$, and $L$ is a family of permutations dependent on the key $K_2\in\{0,1\}^n$. Notably, $L$ does not need to be a block cipher -- for example, $L_{K_2}(x)=x\oplus K_2$.

\noindent\textbf{Assumption.} Suppose the family of permutations $\{L_k,k\in\{0,1\}^n\}$ satisfies the following condition: for any unknown key $k$ and a sufficiently large constant $t$, there exists a distinguisher $\mathcal{A}$ that can determine in $O(T)$ time whether the input $S=\{(a_i,b_i)|i\in[t]\}$ is derived from the set $\{(x_i,y_i)|y_i=L_{k}(x_i),\forall i\in[t]\}$.

To attack $ELE_{K_1,K_2}$, we extend the quantum MITM attack in Section \ref{sec42} and propose the following quantum SITM attack.
\begin{description}
  \item[(1)] Design a distinguisher $\mathcal{A}(S)$ according to the permutations $\{L_k,k\in\{0,1\}^n\}$. If the input $S$ is derived from $\{(x_i,y_i)|y_i=L_{k}(x_i),\forall i\in[t]\}$, $\mathcal{A}$ returns 1; Otherwise $\mathcal{A}$ returns 0.
  \item[(2)] Collect $t$ plaintext-ciphertext pairs $(m_i,c_i),i\in[t]$, then define the function $F:\mathcal{K}\rightarrow \{0,1\}$ using the distinguisher $\mathcal{A}$:
               \begin{equation}\label{equ518}
                F(x)=\left\{
                  \begin{array}{ll}
                    1, & \hbox{if }\mathcal{A}\left(S_{x}\right)=1, \\
                    0, & \hbox{if }\mathcal{A}\left(S_{x}\right)=0,
                  \end{array}
                \right.
               \end{equation}
            where $S_{x}=\{\left(E_x^1(m_i),D_x^2(c_i)\right),i\in[t]\}$.
  \item[(3)] Execute a quantum attack analogous to that in Section \ref{sec42}, but use the redefined function $F$ in Step \textbf{(2)}.
\end{description}

By the definition of $F$, the function $F$ can be implemented via $\mathcal{A}$ with a time complexity of $O(T)$. Thus, the quantum SITM attack achieves a time complexity of $O(T\cdot \sqrt{|\mathcal{K}|})$.

To demonstrate the general framework, we consider a quantum SITM attack on a class of constructions $\widetilde{3XCE}$ (see Fig. \ref{fig3}), where the $\oplus_{k_2}$ operation in the 3XCE scheme is replaced by a more general permutation $L_{k_2}$.
\begin{figure}[h]
\centering
  \includegraphics[scale=0.23]{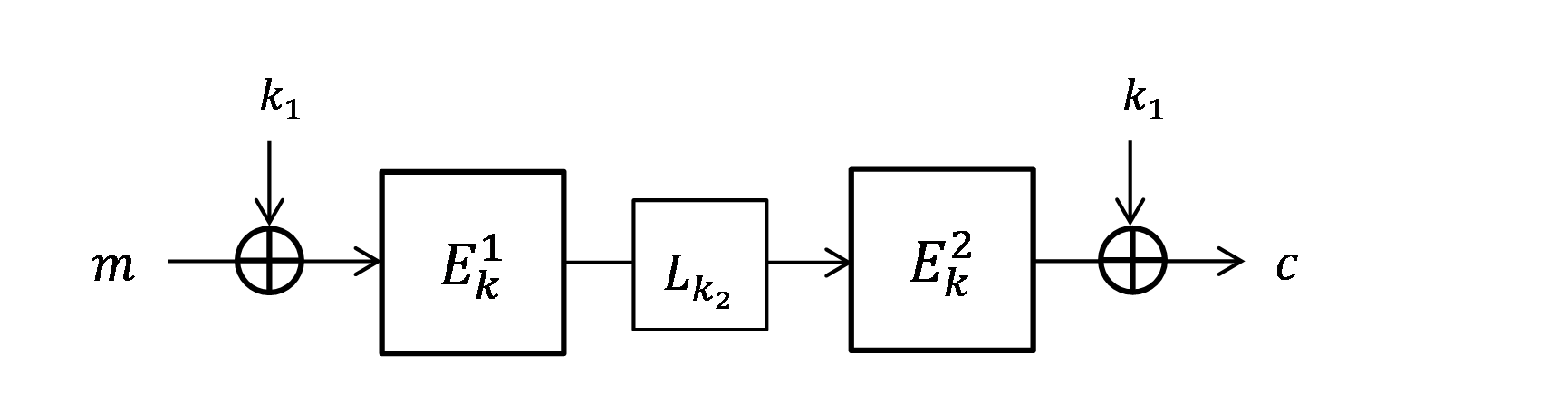}
\caption{The construction $\widetilde{3XCE}$. The middle layer $L_{k_2}$ is a permutation.}
\label{fig3}       
\end{figure}

The construction is formally described as follows:
\begin{equation}
\widetilde{3XCE}_{k,k_1,k_2}(m)=E_k^2\left(L_{k_2}\left(E_k^1(m\oplus k_1)\right)\right)\oplus k_1.
\end{equation}
We propose a quantum SITM attack on the $\widetilde{3XCE}$ construction, where the middle layer can take one of three different forms.

\subsection{Applications to Two Concrete Constructions}\label{sec52}
In this section, we introduce two concrete constructions to illustrate the applications of the quantum SITM attack. For each construction, the middle layer $L$ is sufficiently simple, allowing the distinguisher $\mathcal{A}$ to be implemented in $O(1)$ time.

\subsubsection{Application to 3XCE}
The 3XCE construction is a special case of $\widetilde{3XCE}$ (with $L_{k_2}(x)=x\oplus k_2$) and can be targeted by the quantum SITM attack.

Since $L_{k_2}(x')\oplus L_{k_2}(x)=x'\oplus x$ for any $k_2$, the output difference of $L_{k_2}$ must equal the input difference, independent of the specific value of $k_2$.

The attacker guesses the keys $k$ and $k_1$ (denoted as $x$ and $y$ for guessed values) and computes a set $S_{x,y}=\{(a_i,b_i ),i\in[t]\}$, where $a_i=E_x^1 (m_i\oplus y)$ and $b_i=D_x^2 (c_i\oplus y)$.
Given the input $S_{x,y}$, the distinguisher $\mathcal{A}$ checks whether $b_i\oplus b_j=a_i\oplus a_j$ for all $i,j\in[t]$, and determines if $S_{x,y}$ is derived from $\{(x_i,y_i)|y_i=L_{k_2} (x_i),\forall i\in[t]\}$. Since $t=\lceil(\kappa+2n)/n\rceil+1$ is typically a small constant, the time complexity of $\mathcal{A}$ is $O(1)$, making the total complexity of the quantum SITM attack on $3XCE_{k,k_1,k_2}$ equal to $O(2^{(\kappa+n)/2})$.

\subsubsection{Application to Two-Round Key-Alternating Reflection Cipher}
The quantum SITM attack can be applied to the key-alternating reflection cipher (KARC) construction proposed by Beyne and Chen \cite{beyne2022}. The KARC construction employs three independent keys $k,k_0$ and $k_1$, and is defined as follows (see Fig. \ref{fig4}):
\begin{equation}
KARC_{k,k_0,k_1}(m)=D_{k\oplus\alpha} \left(\mathcal{R}\left(E_k (m\oplus k_1 )\oplus k_2\right) \oplus \sigma(k_2)\right)\oplus \sigma(k_1),
\end{equation}
where $(E,D)$ denotes a block cipher, $\alpha$ is a constant, $\sigma$ is an invertible linear function, and $\mathcal{R}$ is a linear involution.
\begin{figure}[h]
\centering
  \includegraphics[scale=0.23]{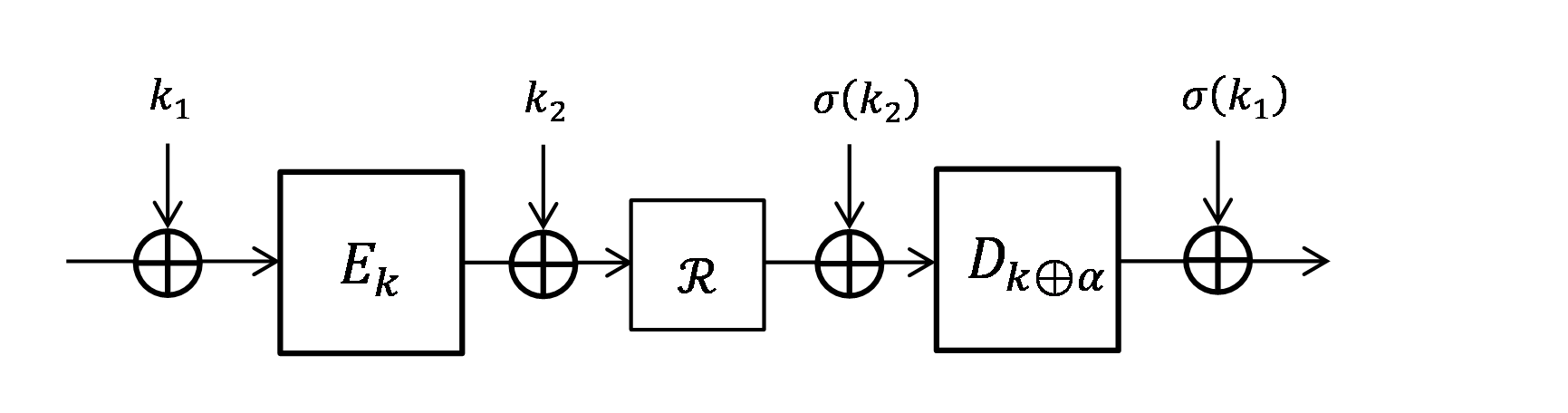}
\caption{Two-round key-alternating reflection cipher.}
\label{fig4}       
\end{figure}

Define $L_{k_2}(x)=\mathcal{R}(x\oplus k_2 )\oplus \sigma(k_2)$. For a set $S=\{(a_i,b_i)|i\in[t]\}$, if $b_i=L_{k_2}(a_i)$, then
\begin{equation}
b_i= \mathcal{R}(a_i\oplus k_2)\oplus \sigma(k_2)=\mathcal{R}(a_i)\oplus \mathcal{R}(k_2)\oplus \sigma(k_2).
\end{equation}
Thus, $b_i\oplus \mathcal{R}(a_i)= \mathcal{R}(k_2 )\oplus \sigma(k_2 )$ for all $i\in[t]$, indicating that $b_i\oplus \mathcal{R}(a_i)$ is a fixed value dependent on the secret key $k_2$.

The attacker guesses values for keys $k$ and $k_1$ (denoted as $x$ and $y$ for guessed values) and computes a set $S_{x,y}=\{(a_i,b_i ),i\in[t]\}$, where
$a_i=E_x (m_i\oplus y)$ and $b_i=E_{x\oplus\alpha} (c_i\oplus \sigma(y))$.
Given the input $S_{x,y}$, the distinguisher $\mathcal{A}$ (without knowledge of $k_2$) checks whether $b_i\oplus\mathcal{R}(a_i)=b_j\oplus\mathcal{R}(a_j )$ for all $i,j\in[t]$ to determine if $S_{x,y}$ is derived from $\{(x_i,y_i)|y_i=L_{k_2}(x_i),\forall i\in[t]\}$. Since $t$ is typically a small constant, the time complexity of $\mathcal{A}$ is $O(1)$, and the complexity of the quantum SITM attack is $O(2^{(\kappa+n)/2})$.

\begin{remark}
In the two special cases above, only a few pairs $(a_i,b_i)$ suffice for $\mathcal{A}$ to make a decision. For a complex permutation $L_{k_2}$, however, the distinguisher $\mathcal{A}$ may require a large amount of input data $(a_i,b_i)$, check whether these data satisfy specific cryptographic properties (e.g., differential or linear characteristics) with a probability $2^{-p}$ (where $p$ is large), and finally determine if they are derived from the set $\{(x_i,y_i)|y_i=L_{k_2}(x_i),\forall i\in[t]\}$.
\end{remark}

\subsection{Quantum SITM Attack Combined with Mirror Slide Attack}\label{sec53}
In Section \ref{sec52}, the middle layers in the concrete constructions are relatively simple. We now consider a more general case where $L$ in Fig. \ref{fig3} is an arbitrary involution (i.e., $L\circ L=Id$, where $Id$ denotes the identity transformation). In this scenario, the mirror slide attack \cite{dunkelman2015} can be integrated with the quantum SITM attack, where the distinguisher $\mathcal{A}$ leverages mirror slide pairs to make decisions.

Assume that for two plaintext-ciphertext pairs $(m,c),(m^*,c^*)$, the following holds:
\begin{equation}\label{equ521}
E_k^1(m\oplus k_1)=D_k^2(c^*\oplus k_1).
\end{equation}
Since $L$ is an involution, it follows that
\begin{equation}\label{equ522}
L_{k_2}\left(E_k^1(m\oplus k_1)\right)=L_{k_2}^{-1}\left(D_k^2(c^*\oplus k_1)\right).
\end{equation}
By the construction in Fig. \ref{fig3}, this implies that
\begin{equation}\label{equ523}
D_k^2(c\oplus k_1)=E_k^1(m^*\oplus k_1).
\end{equation}
If Eq. (\ref{equ521}) holds (and thus Eq. (\ref{equ523}) also holds), the pair $(m,m^*)$ is referred to as a \textbf{mirror slid pair}.

We propose two approaches to exploit mirror slide pairs in the following quantum SITM attack, one suitable for the Q1 model and the other for the Q2 model.

\subsubsection{Attack in the Q1 Model}
The attacker collects $t=2^{(n+1)/2}$ plaintext-ciphertext pairs $(m_i,c_i)$ for $i\in[t]$. With high probability, a mirror slid pair is expected to exist, satisfying Eqs. (\ref{equ521}) and (\ref{equ523}).

The attacker guesses keys $k$ and $k_1$ (denoted as $x$ and $y$ for guessed values) and computes
\begin{equation}
S_{x,y}=\{(a_i,b_i), a_i=E_x^1(m_i\oplus y), b_i=D_x^2(c_i\oplus y),i\in[t]\}.
\end{equation}
According to Eqs. (\ref{equ521}) and (\ref{equ523}), a distinguisher $\mathcal{A}(S_{x,y})$ is constructed to check if there exist indices $i,j\in[t](i\neq j)$ such that $a_i=b_j$ and $a_j=b_i$. If such indices exist, $\mathcal{A}$ returns 1; Otherwise it returns 0.
The quantum SITM attack described in Section \ref{sec51} can then be applied. However, since $\mathcal{A}$ requires $O(t^2)$ steps for the check, the attack's time complexity becomes $O(t^2\cdot 2^{(\kappa+n)/2})=O(2^{(\kappa+3n)/2})$, which is less efficient than the Grover-based brute-force attack.

This attack is a known-plaintext attack, and the attacker must collect $t=2^{(n+1)/2}$ plaintext-ciphertext pairs to ensure the high-probability existence of a mirror slide pair. This imposes significant overhead on $\mathcal{A}$, causing the attack to lose its advantage over the Grover-based brute-force approach. To address this, we will propose a quantum chosen-ciphertext attack in the Q2 model, where the new attack specifies partial information of the mirror slide pair. Once the attacker guesses the correct keys, a complete mirror slide pair can be computed.

\subsubsection{Attack in the Q2 Model}
In the Q2 model, the attacker is allowed to query $Enc_{k_1,k_2}$ and $Enc_{k_1,k_2}^{-1}$ in quantum superposition. Instead of collecting $2^{(n+1)/2}$ plaintext-ciphertext pairs, we specify that the values on both sides of Eq. (\ref{equ521}) equal $i\in[t]$, where $t=\lceil (\kappa+2n)/n\rceil +1$ ensures key uniqueness with high probability.

Assume $E_k^1(m\oplus k_1)=D_k^2(c^*\oplus k_1)=i$ for some $i\in\{0,1\}^n$ in Eq. (\ref{equ521}). Then
\begin{eqnarray}
&&m=D_k^1(i)\oplus k_1,\\
&&c^*=E_k^2(i)\oplus k_1.
\end{eqnarray}
The attacker queries $m$ to obtain ciphertext $c=Enc_{k_1,k_2}(m)$ and queries $c^*$ to obtain plaintext $m^*=Enc_{k_1,k_2}^{-1}(c^*)$.
By Eq. (\ref{equ523}), $c$ and $m^*$ must satisfy $D_k^2(c\oplus k_1)=E_k^1(m^*\oplus k_1)$. Based on this property, we can propose a quantum SITM attack in the Q2 model. This attack is essentially consistent with the quantum SITM framework. To avoid redundancy, only the key points of this attack are described below.

The attacker guesses keys $k,k_1$ (denoted as $x,y$ for guessed values) and computes the set
    \begin{equation}
        S_{x,y}=\{(a_i,b_i), i\in[t]\},
    \end{equation}
    where $a_i=D_x^2(Enc_{k_1,k_2}(D_x^1(i)\oplus y)\oplus y)$ and $b_i=E_x^1(Enc_{k_1,k_2}^{-1}(E_x^2(i)\oplus y)\oplus y)$.
Because the attacker does not know the true keys $k,k_1$, the computation of $a_i,b_i$ is implemented by online queries to $Enc_{k_1,k_2}$ and $Enc_{k_1,k_2}^{-1}$.

Define a distinguisher $\mathcal{A}(S_{x,y})$, which checks if $a_i=b_i$ for all $i\in[t]$. It returns 1 if true; Otherwise returns 0. The time complexity of $\mathcal{A}$ is $O(1)$.

Define $F(x,y)$ as in Eq. (\ref{equ518}). The attacker employs Grover's algorithm to search for keys satisfying $F(x,y)=1$. The quantum SITM attack incurs a time complexity of $O(2^{(\kappa+n)/2})$, outperforming the Grover-based brute-force attack.

\begin{remark}
The requirement that $L$ is an involution can be relaxed to $L\circ P \circ L=Id$ for a key-independent operation $P$. In this case, this attack only requires simple modifications. Firstly, Eq. (\ref{equ522}) becomes $P\circ L_{k_2}\left(E_k^1(m\oplus k_1)\right)=L_{k_2}^{-1}\left(D_k^2(c^*\oplus k_1)\right)$, and
Eq. (\ref{equ523}) becomes $P\circ D_k^2(c\oplus k_1)=E_k^1(m^*\oplus k_1)$. Thus, the distinguisher $\mathcal{A}$ should check $P(a_i)=b_i$ instead of $a_i=b_i$.
\end{remark}

\section{Discussions and Conclusions}
This paper systematically evaluates the quantum security of two KLE constructions -- 2kTE and 3XCE -- through quantum MITM attacks. By leveraging quantum algorithms such as Grover's search and QCF, we demonstrate significant vulnerabilities in both constructions under quantum models.

\subsection{Quantum Security Analysis of 2kTE}
For 2kTE, we propose two quantum MITM attacks in the Q2 model, where attackers can leverage quantum superposition queries.
\begin{description}
  \item[\textbf{QCF-Based Attack}:] With a time complexity of $O(2^{2\kappa/3})$ and QRAM requirement of $O(2^{2\kappa/3})$, this attack outperforms classical MITM ($O(2^\kappa)$) and quantum brute-force attacks ($O(2^\kappa)$) by exploiting claw-finding problems.
  \item[\textbf{Grover-Based Attack}:] Achieving a time complexity of $O(2^{\kappa/2})$ (quadratic speedup over classical methods), this approach requires $O(2^\kappa)$ QRAM. Notably, when QRAM is sufficiently large, 2kTE offers no security enhancement over the underlying block cipher in Q2 model, as its quantum attack complexity matches that of direct quantum brute-force key search.
\end{description}

\subsection{Quantum Security Analysis of 3XCE}
For 3XCE, we design quantum MITM attacks in both Q1 and Q2 models.

\begin{description}
  \item[\textbf{Attacks in Q2 Model}:] By treating 3XCE as a special case of generalized 2kTE, we adapt QCF and Grover-based strategies. The Grover-based variant achieves $O(2^{(\kappa+n)/2})$ time complexity with $O(2^n)$ QRAM, outperforming classical MITM ($O(2^{\kappa+n})$) and Grover-based brute-force attack ($O(2^{(\kappa+2n)/2})$).
  \item[\textbf{Attacks in Q1 Model}:] A novel quantum MITM attack without QRAM or classical preprocessing achieves $O(2^{(\kappa+n)/2})$ time complexity, providing a quadratic speedup over classical counterparts. This attack leverages the linear property of XOR operations to sieve candidate keys efficiently.
\end{description}

\subsection{Generalization to Quantum SITM Attack}
We introduce a quantum SITM framework, extending MITM to constructions of the form $ELE=E^2\circ L\circ E^1$. Key advancements include:
\begin{description}
  \item[\textbf{General Framework}:] A distinguisher-based approach to filter candidate keys using intermediate state properties (e.g., XOR differences or linear involutions).
  \item[\textbf{Three Applications}:] We show three applications.
  \noindent\begin{description}
          \item[3XCE:] When applied to 3XCE, the framework exploits the linearity of XOR operations. By checking for consistent differences in intermediate states ($a_i\oplus a_j=b_i\oplus b_j$), the distinguisher efficiently sieves valid key candidates, achieving a time complexity of $O(2^{(\kappa+n)/2})$ in the Q1 model without QRAM.
          \item[KARC:] For KARC-like constructions with linear involutions, the distinguisher detects fixed differences in intermediate states (e.g., $b_i\oplus\mathcal{R}(a_i)$ is invariant), enabling efficient key recovery with $O(2^{(\kappa+n)/2})$ complexity.
          \item[Involution-Based Constructions:] The attack is applied to a general construction $Enc=E^2\circ L\circ E^1$ where the middle layer $L$ is an involution or satisfies $L\circ P\circ L=Id$ ($P$ is a fixed operation). The key consideration is combining mirror slide attacks with quantum queries to reduce complexity in Q2 model.
        \end{description}
\end{description}
\subsection{Implications and Open Challenges}
\begin{description}
  \item[\textbf{QRAM-Dependent Security}:] According to our quantum MITM attacks, the QRAM size will affect the conclusion of security analysis. If QRAM size is not allowed to exceed the time complexity, the attacks on 2kTE would have higher time complexity than Grover-based brute-force attack on underlying block cipher. This indicates that 2kTE may enhance the quantum security of underlying block ciphers. However, 2kTE cannot enhance the classical security since it can be broken in time $O(2^\kappa)$ by classical MITM attack.
  \item[\textbf{Broader Cryptanalysis}:] Firstly, although the general framework of quantum SITM attack is proposed for $ELE=E^2\circ L\circ E^1$, it is also applicable for the case that $E^1=Id$ or $E^2=Id$. Perhaps this will be more useful in quantum cryptanalysis. Secondly, the quantum SITM framework offers a versatile tool for attacking KLE constructions. It is also applicable in quantum cryptanalysis of block ciphers or iterative structures. Finally, while this paper demonstrates the combination of quantum SITM attacks with mirror slide attacks, future research could explore integrating it with other techniques such as biclique attackp or linear cryptanalysis.
  \item[\textbf{Practical Considerations}:] As QRAM implementations remain hypothetical, optimizing memory usage in real-world quantum attacks remains a critical challenge. While some attacks in this paper assume Q2 model, they highlight the need for KLE designs that explicitly account for quantum adversaries, particularly in quantum-resistant cryptography standards.
\end{description}

\subsection{Conclusion}
This study underscores the fragility of traditional KLE techniques under quantum computing, demonstrating that naive extensions of classical ciphers may fail to withstand quantum MITM and SITM attacks. The proposed quantum SITM framework opens new avenues for analyzing complex cryptographic constructions, urging the community to prioritize quantum-secure KLE designs with rigorous provable security against both classical and quantum adversaries.



\end{document}